\documentclass[12pt]{article}

\usepackage[utf8]{inputenc}
\usepackage[ruled, vlined]{algorithm2e}
\usepackage{amsfonts, amsmath, amssymb, amsthm, bbm, color, enumerate, graphicx, mathtools, tikz, hyperref, relsize, bm}
\usepackage[style = authoryear, maxcitenames = 2, natbib]{biblatex}
\usepackage[margin=1.5in,footskip=0.25in]{geometry}
\usepackage[affil-it]{authblk}

\newtheorem{theorem}{\bf Theorem}

\newtheorem{proposition}{Proposition}

\newtheorem{otherassumption}{Assumption}

\makeatletter
\renewcommand{\theotherassumption}{A\@arabic\c@otherassumption}
\makeatother

\newcommand{\n}{\nu}

\newcommand{\s}{\sigma}
\newcommand{\G}{\Gamma}
\newcommand{\rh}{\rho}

\title{Constrained Fiducial Inference for Gaussian Models}
\author[1]{Hank Flury}
\author[1]{Jan Hannig}
\author[1]{Richard Smith}
\affil[1]{University of North Carolina at Chapel Hill, Department of Statistics and Operations Research, Chapel Hill, North Carolina, United States}

\begin{document}
\maketitle
Hank Flury is the corresponding author. His email is: hankflury@gmail.com. Hank Flury's ORCiD is 0009-0004-1722-8607. Richard Smith's ORCiD is 0000-0002-3454-3199. Jan Hannig's ORCiD is 0000-0002-4164-0173.

Jan Hannig's research was supported in part by the National Science Foundation under Grant Nos. DMS-1916115, 2113404, and 2210337. Hank Flury was also supported in part by the National Science Foundation under Grant No. DMS-2210337.

Keywords: Gaussian processes (60G15), Monte Carlo methods (65C05), Fiducial Inference (62F99)
\newpage

\begin{abstract}
     We propose a new fiducial Markov Chain Monte Carlo (MCMC) method for fitting parametric Gaussian models. We utilize the Cayley transform to decompose the parametric covariance matrix, which in turn allows us to formulate a general data generating algorithm for Gaussian data. Leveraging constrained generalized fiducial inference, we are able to create the basis of an MCMC algorithm, which can be specified to parametric models with minimal effort. The appeal of this novel approach is the wide class of models which it permits, ease of implementation and the posterior-like fiducial distribution without the need for a prior. We provide background information for the derivation of the relevant fiducial quantities, and a proof that the proposed MCMC algorithm targets the correct fiducial distribution. We need not assume independence nor identical distribution of the data, which makes the method attractive for application to time series and spatial data. Well-performing simulation results of the MA(1) and Mat\'ern models are presented. 
\end{abstract}

\section{Introduction}
Gaussian models are commonly utilized in spatial and time series data analysis, often because of their flexibility. As such, rich theory for spatial and temporal data has been developed around Gaussian models; see for example \citet{brockwellDavis} and \citet{stein99}. Consider the fitting of data to a parameterized Gaussian model from which the mean and covariance function are easily extracted. As a simple example of such a parameterization, we consider the MA(1) model. The MA(1) model is generated by the relationship
\begin{equation}
    X_{n} = \epsilon_n+\rho\epsilon_{n-1}, \hspace{5mm} \epsilon_n \sim N(0,\sigma^2) \hspace{1mm}\forall n \in \mathbb{Z}.
\end{equation}
This relationship produces the covariance matrix: 
\begin{equation}\label{MA(1)Matrix}
    \sigma^2\begin{bmatrix} 
    1 + \rho^2 & \rho & 0 & 0 & \dots \\
    \rho & 1 + \rho^2 & \rho & 0 & \\
    0 & \rho & 1 + \rho^2 & \rho & \\
    0 & 0 & \rho & 1 + \rho^2 & \\
    \vdots &  &  &  & \ddots
    \end{bmatrix}.
\end{equation}
To ensure identifiability, restrict $|\rho| \leq 1$ \citet{brockwellDavis}. The MA(1) model is relatively simple. For a more complicated model, consider the Mat\'ern model. The covariance function of the Mat\'ern model is given by
\begin{equation}
    C_\nu(d) = \s^2\frac{2^{1-\n}}{\G(\n)}\left(\sqrt{2\n}\frac{d}{\rh}\right)^{\n}K_{\n}\left(\sqrt{2\n}\frac{d}{\rh}\right),
\end{equation}
where $\G$ is the gamma function, $K_{\nu}$ is the modified Bessel function of the second kind, and the input $d$ is the distance between sites. This model is common in spatial applications, but the relative complexity of its form may make it more difficult to fit. In fact, the Mat\'ern model is particularly challenging to fit using Bayesian methods, as certain priors on its parameters result in non-integrable posteriors \citep{Berger01122001}. The fiducial model which we present in this work both circumvents the need to select a prior, and results in a posterior-like distribution which is guaranteed to be a proper probability distribution \citep{Hannig02072016}.

Fitting these models is usually done through maximum likelihood estimation or via the Bayes estimator. This work considers an alternative method in fiducial inference. In addition to the usual attractive qualities of fiducial inference, namely a Bayesian posterior-like distribution without the need to specify a prior, the advantage of our approach is its ease of implementation. We provide the basis for a general (MCMC) algorithm which allows for the fitting of Gaussian models. This method requires a practitioner to provide little more than a function which returns the covariance matrix and its derivative with respect to a parameter vector to fit any particular Gaussian model. Moreover, this approach permits a wide class of models. Specifically, we need not assume that our data are i.i.d., which makes this method especially attractive for application to time series data. We implement this algorithm in Matlab, the code for which can be found at \url{https://github.com/fluryh/Constrained-Fiducial-Inference-for-Gaussian-Models}.

This work builds on that of \citet{murph2022generalized}, which provides the basis for fiducial inference constrained to a manifold. The approach in this paper differs from that of \citet{murph2022generalized} in a few ways. First, the previous paper develops an approach for constrained fiducial inference in situations for which a level set is a natural parameterization for the constraining manifold. By contrast, in the motivating examples of our work, the natural parameterization is the form of the covariance matrix itself. As such, we work to create a method for which beginning at a parameterized covariance matrix is natural. Finally, both this work and that of \citet{murph2022generalized} utilize the Cayley transform, a decomposition of an orthogonal matrix which is unique up to multiplication by a signature matrix $Z$. In previous work a single signature matrix was selected at each iteration of the algorithm. In our work, we sample from all potential signature matrices and average over the sample, in an effort to reduce the influence of the arbitrary selection of a signature matrix. Another related work, \citet{liang2025extended}, proposes automatic fiducial inference based on large neural network. 

Our work adds to the growing base of literature on generalized fiducial inference. \citet{Du02102025} introduces AutoGFI, a method which estimates model parameters with the need for the user to specify a fiducial distribution. This is similar to our work in that it relieves the user from the specifics of determining an appropriate fiducial distribution, though for entirely different models and settings. The recent work of \citet{Yang2025} also considers the geometric perspective of fiducial inference, including usage of the Cayley transform and projection onto manifolds. Our work is closely related to these results, though we consider the unique setting of time series and spatial data. Alternatively, \citet{Williams2023} develop a fiducial method for multivariate time series data, specifically focusing on graphical vector autoregression models. Though there is some overlap in the data being considered, their approach is entirely distinct from ours, as they develop an epsilon-admissible subset procedure. Other recent advancements in GFI include \citet{Du2025}, which develops GFI procedures for high dimensional linear regression on sparse data, \citet{Cui02072024} which focuses on a GFI approach for interval-censored survival data, and \citet{Yi2022}, which leverages fiducial inference's posterior-like distribution to provide uncertainty quantification for spectral lines in signal processing. 

The rest of this work is structured as follows. Section \ref{Fiducial derivations} details the approach taken for the inference of Gaussian parametric models and provides derivations of needed quantities. Section \ref{pseudo} presents pseudocode for the implementation of our proposed MCMC algorithm. Section \ref{Fiducial simulations} describes the results of our algorithm on simulated data from the MA(1) and Mat\'ern models. Finally, Section \ref{Fiducial proofs} contains a proof that the proposed MCMC algorithm targets the correct fiducial distribution, and cites sufficient conditions for the asymptotic normality and consistency of the estimator.

\section{Derivation of the Fiducial Distribution}\label{Fiducial derivations}
We begin with the parametric representation of the data generating algorithm (DGA). Let $\theta \in \mathbb{R}^p$ be the vector which parameterizes the Gaussian model of interest. Let $\Sigma(\theta): \mathbb{R}^p \to \mathbb{R}^{n\times n}$ be the function which defines the covariance matrix, and $\mu(\theta): \mathbb{R}^p \to \mathbb{R}^n$ be the function which defines the parametric mean vector. We make the following assumptions on these parameterizations:
\begin{enumerate}
    \item $\theta \in \Theta \subset\mathbb{R}^p$, where $\Theta$ is open. The vector $\theta$ represents the minimum number of parameters required to parameterize the given model.
    \item $\Sigma(\theta)$ is positive definite and has unique eigenvalues for all $\theta \in \Theta$.
    \item The function $\theta \to (\mu(\theta),\Sigma(\theta))$ is continuous and injective.
    \item Both $\mu(\theta)$ and $\Sigma(\theta)$ are twice differentiable on $\Theta$.
\end{enumerate}These assumptions imply that $\theta\to (\mu(\theta),\Sigma(\theta))$ defines a $p$ dimensional manifold. We note that the mean and variance functions $\mu$ and $\Sigma$, may include covariates (as is the case for the Mat\'ern covariance model, which requires the distance between sites), but we consider this to be part of the functions themselves.

For Gaussian data, the definitions of $\mu$ and $\Sigma$ immediately allows us to define a DGA. First, we find the function $\Sigma^{1/2}(\theta)$ such that $\Sigma^{1/2}(\theta)\Sigma^{1/2}(\theta) = \Sigma(\theta)$. Let $U$ be a vector of independent standard normal random variables. Then our DGA may be written
\begin{equation}\label{parametricDGA}
     Y(U,\theta) := \mu(\theta) + \Sigma^{1/2}(\theta)U.
\end{equation}From this, we can take in some randomness in the form of $U$, and observe the data $Y$, which will follow a normal distribution with variance $\Sigma(\theta)$, and mean $\mu(\theta)$. From here, we could define the generalized fiducial distribution (GFD) as
\begin{equation}\label{GFD}
     \frac{f(y|\theta)D\left(\nabla_{\theta}Y(u,\theta)|_{u=Y^{-1}(y,\theta)}\right)}{\int_{\Theta}f(y|\theta')D\left(\nabla_{\theta'}Y(u,\theta')|_{u=Y^{-1}(y,\theta')}\right)d\theta'}
\end{equation} where $D(X) = \det(n^{-1}X^TX)^{-1/2}$ \citep{Hannig02072016}. The matrix $\nabla_{\theta}Y(u,\theta)|_{u=Y^{-1}(y,\theta)}$ is the gradient of the data generating algorithm evaluated at values of $u$ which lead to the observed values $Y$. We note that this $u$ need not be unique; for details of this argument, see \citet{Hannig02072016}. To calculate this quantity in practice, we would first find $\nabla_\theta Y(u,\theta)$. Then we would invert the data generating algorithm to solve for $U$. In our example this gives us
\begin{equation*}
    U = \Sigma^{-1/2}(\theta)\left(Y-\mu(\theta)\right).
\end{equation*} That is, we could write
\begin{equation*}
    \nabla_{\theta}Y(u,\theta)|_{u=Y^{-1}(y,\theta)} = \nabla_\theta Y(\Sigma^{-1/2}(\theta)\left(Y-\mu(\theta)\right),\theta).
\end{equation*}

While the above may work for theoretical uses, in practice the form of $\Sigma^{1/2}(\theta)$, and therefore $\Sigma^{-1/2}(\theta)$ and $\nabla_\theta\Sigma^{1/2}(\theta)$, may be prohibitively difficult to find. Without $\Sigma^{ 1/2}(\theta)$, we cannot write down the GFD, let alone evaluate it. As such, we need another method by which we can describe the data generating algorithm. Following \citet{murph2023introduction}, we utilize the Cayley Transform to write our DGA. We begin with our covariance matrix, $\Sigma(\theta)$. This matrix is assumed to be positive-definite, and therefore we may take the compact SVD, which we denote $\Sigma = S\Lambda^2S^T$, where $\Lambda$ is a diagonal matrix of positive elements and $S$ is an orthogonal matrix. We reparameterize $S$ by making use of the Cayley transform as described in Theorem \ref{cayley} of \citet{murph2022generalized}. We cite the following two theorems regarding the Cayley transform:
\begin{theorem}[Cayley Transform, \citet{eves}] \label{firstcayley}
    Every real orthogonal matrix S, that does not have characteristic root -1 may be represented as
    \begin{equation}
        S = (I_d - A)(I_d + A)^{-1}
    \end{equation}
    where $I_d$ is the identity matrix of the same dimension as $S$, and $A$ is a skew-symmetric matrix ($A^T=-A)$. 
\end{theorem}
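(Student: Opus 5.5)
The natural approach is to invert the Cayley map explicitly. Given a real orthogonal $S$ without characteristic root $-1$, the matrix $I_d + S$ is invertible, so we can define the candidate
\begin{equation*}
    A := (I_d - S)(I_d + S)^{-1}.
\end{equation*}
The plan has three steps. First, show that $A$ is skew-symmetric. Second, show that $I_d + A$ is invertible. Third, show that $(I_d - A)(I_d + A)^{-1}$ recovers $S$. Throughout, the key algebraic fact is that $I_d - S$ and $(I_d+S)^{-1}$ commute, since both are rational functions of the single matrix $S$. Hence $A = (I_d+S)^{-1}(I_d-S)$ as well.

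For skew-symmetry, take the transpose and use $S^T = S^{-1}$:
\begin{equation*}
    A^T = (I_d + S^{-1})^{-1}(I_d - S^{-1}).
\end{equation*}
Now factor out $S^{-1}$, writing $I_d + S^{-1} = S^{-1}(S + I_d)$ and $I_d - S^{-1} = S^{-1}(S - I_d)$. This gives
\begin{equation*}
    A^T = (S+I_d)^{-1} S \, S^{-1}(S - I_d) = (I_d+S)^{-1}(S - I_d) = -A,
\end{equation*}
where the last equality uses the commutation noted above.

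For invertibility, compute
\begin{equation*}
    I_d + A = \bigl[(I_d+S) + (I_d - S)\bigr](I_d+S)^{-1} = 2(I_d+S)^{-1},
\end{equation*}
which is clearly invertible. Similarly, $I_d - A = 2S(I_d+S)^{-1}$. Therefore
\begin{equation*}
    (I_d - A)(I_d+A)^{-1} = 2S(I_d+S)^{-1}\cdot\tfrac12 (I_d+S) = S,
\end{equation*}
which is the claimed representation. As a remark not needed for the statement, invertibility of $I_d + A$ also holds independently: a real skew-symmetric $A$ has purely imaginary eigenvalues, so $-1$ is never an eigenvalue of $A$. Uniqueness follows because the Cayley map is an involution on its domain.

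The only real obstacle is bookkeeping the order of the factors. That is resolved by the commutation observation, which holds because every matrix involved is a polynomial in $S$ or the inverse of one. The hypothesis that $-1$ is not a characteristic root is used exactly once, to guarantee that $(I_d+S)^{-1}$ exists.
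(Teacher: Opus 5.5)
Your proof is correct and complete. Defining $A=(I_d-S)(I_d+S)^{-1}$ works, and each step checks out: the commutation of $I_d-S$ with $(I_d+S)^{-1}$, the computation $A^T=(I_d+S)^{-1}(S-I_d)=-A$, and the identities $I_d+A=2(I_d+S)^{-1}$ and $I_d-A=2S(I_d+S)^{-1}$. The hypothesis on the characteristic root is used exactly where it is needed. Your closing remark on uniqueness is also right, though not required: any skew-symmetric $A$ representing $S$ forces $I_d+S=2(I_d+A)^{-1}$ to be invertible, so the inverse map applies.

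The paper gives no proof of this statement; it quotes the result from Eves and uses it as a black box, so there is no competing argument to compare against. What your route adds over the citation is the explicit inverse formula. This is precisely the formula the paper's pseudocode relies on when it computes $A_i=(I-SZ_i')/(I+SZ_i')$, which in MATLAB's right-division notation means $(I-SZ_i')(I+SZ_i')^{-1}$. Your argument also shows that the only obstruction is the invertibility of $I_d+S$. That obstruction is exactly what the paper handles with O'Dorney's signature-matrix theorem and with the admissibility check ``$SZ_i'$ does not have $-1$ as a characteristic root'' in the algorithm.
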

\begin{theorem}[O'Dorney, \citet{odorney2014}] \label{cayley}
    For every real orthogonal matrix S, there exists a signature matrix (a matrix with $\pm1$ on the diagonal, and zero everywhere else) Z such that
    \begin{equation*}
        SZ = (I_d-A)(I_d + A)^{-1}
    \end{equation*} for a skew symmetric matrix $A$, and the elements of $A$ are between $-1$ and $1$, inclusive. 
\end{theorem}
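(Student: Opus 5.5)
The plan is to choose the signature matrix $Z$ by an extremal principle: among the $2^d$ signature matrices, take one that maximizes $|\det(I_d+SZ)|$. I will then show that the Cayley parameter $A$ of $Q:=SZ$ has all entries in $[-1,1]$, by comparing this maximum with the value obtained after flipping two signs of $Z$. Theorem~\ref{firstcayley} provides the representation once $I_d+Q$ is invertible. Recall that the representation is $A=(I_d-Q)(I_d+Q)^{-1}$, so that
\begin{equation*}
M:=(I_d+Q)^{-1}=\tfrac12(I_d+A).
\end{equation*}
Since $A$ is skew-symmetric, $A_{jj}=0$.

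\textbf{Step 1 (the maximum is nonzero).} I need some signature $Z$ with $\det(I_d+SZ)\neq 0$. Write $Z=\mathrm{diag}(z)$ and consider the map $z\mapsto \det(I_d+S\,\mathrm{diag}(z))$. Each $z_j$ scales a single column of $S$, so this map is a multilinear polynomial in $z_1,\dots,z_d$. Its constant term is $\det I_d=1$, so the polynomial is not identically zero. A multilinear polynomial that vanishes on all of $\{\pm1\}^d$ is identically zero, since it is determined by its values on the cube. Hence the maximizer $Z$ satisfies $\det(I_d+Q)\neq0$, and Theorem~\ref{firstcayley} gives $Q=(I_d-A)(I_d+A)^{-1}$ with $A$ skew-symmetric.

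\textbf{Step 2 (single flips are useless, so flip pairs).} Flipping the sign of one column $j$ gives $Q'=Q-2Qe_je_j^T$. By the matrix determinant lemma,
\begin{equation*}
\det(I_d+Q')=\det(I_d+Q)\,\bigl(1-2e_j^TMQe_j\bigr).
\end{equation*}
Using $MQ=I_d-M$, the bracket equals $2M_{jj}-1=A_{jj}=0$. This is consistent with the fact that orthogonal matrices of determinant $-1$ have eigenvalue $-1$, and it shows that single flips carry no information.

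So instead flip the two distinct indices $j,k$. With $E=[e_j\ e_k]$, this gives $Q'=Q-2QEE^T$. The generalized determinant lemma then gives
\begin{equation*}
\det(I_d+Q')=\det(I_d+Q)\det\bigl(I_2-2E^TMQE\bigr)=\det(I_d+Q)\det\bigl(-I_2+2E^TME\bigr).
\end{equation*}
Since $2M=I_d+A$, the last $2\times2$ matrix is $E^TAE=\begin{bmatrix}0&a_{jk}\\-a_{jk}&0\end{bmatrix}$, whose determinant is $a_{jk}^2$. Therefore
\begin{equation*}
\det(I_d+Q')=a_{jk}^2\det(I_d+Q).
\end{equation*}

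\textbf{Step 3 (conclusion).} By maximality of $|\det(I_d+Q)|$, we get $a_{jk}^2\le 1$ for every pair $j\neq k$. Together with $a_{jj}=0$, every entry of $A$ lies in $[-1,1]$, and $SZ=(I_d-A)(I_d+A)^{-1}$ as claimed.

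The main obstacle is Step 2. One must notice that single sign flips degenerate, and work instead with the rank-two update, reducing the $2\times 2$ determinant to $E^TAE$ via the identity $MQ=I_d-M$. Everything else is routine once that computation is in place.
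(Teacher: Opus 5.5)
Your proof is correct. The paper gives no proof of this statement: it quotes it, together with Theorem~\ref{firstcayley}, from O'Dorney (2014). So your argument is a self-contained replacement for a citation rather than a variant of something in the text.

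The key computations check out. The identities $(I_d+Q)^{-1}Q=I_d-M$ and $2M=I_d+A$ reduce the rank-two determinant lemma to $\det(E^TAE)=a_{jk}^2$. The matrix $Z(I_d-2EE^T)$ is again a signature matrix, so maximality forces $a_{jk}^2\le 1$ once $\det(I_d+Q)\neq 0$. For Step 1 there is a quicker justification: every nonconstant multilinear monomial averages to zero over $\{\pm1\}^d$, so the average of $\det(I_d+S\,\mathrm{diag}(z))$ over the cube equals its constant term $1$.

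Compared with the bare citation, your route gives two things relevant to the paper.

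\emph{It is constructive.} Step 3 only uses that $Z$ is a local maximum under pair flips. Start from any permissible $Z$ and flip any pair with $|a_{jk}|>1$; this multiplies $|\det(I_d+SZ)|$ by $a_{jk}^2>1$. Since there are finitely many signature matrices, this greedy search terminates at a $Z$ satisfying the theorem.

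\emph{It identifies which signature matrices are permissible.} Repeat your computation with $E=[e_j]_{j\in T}$ for an arbitrary index set $T$, and let $Z_T$ denote $Z$ with the signs in $T$ flipped. This gives $\det(I_d+SZ_T)=\det(A_{TT})\det(I_d+SZ)$. So from a single Cayley parameter $A$ you can read off permissibility of every other signature matrix. Odd $|T|$ is never permissible, since skew-symmetric minors of odd order vanish. Even $|T|$ is permissible exactly when the principal minor $\det(A_{TT})$ is nonzero. This is precisely the indicator $1_{\{\theta,Z_i\}}$ that the MCMC step must evaluate.

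The citation, of course, buys brevity.
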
Note the requirement that $S$ must not have $-1$ as a characteristic root. This could cause a problem, as $S$ very well could have $-1$ as a characteristic root. However, we also note that multiplication of $S$ by some signature matrix $Z$ does not affect our distribution. Namely, if $S$ is the result of the SVD such that $S\Lambda^2S = \Sigma$, then so too does $SZ\Lambda^2ZS$. Theorem \ref{cayley} ensures that there must exist a signature matrix $Z$ such that $SZ$ does not have $-1$ as a characteristic root. So, for any $\theta$, we can find at least one pair $A,\Lambda$ such that we can define $\Sigma(\theta)$ by a diagonal matrix $\Lambda$ and a skew-symmetric matrix $A$. Using this Cayley transform, and rewriting the mean solely as a vector $\mu \in \mathbb{R}^n$, we can rewrite the DGA as
\begin{equation}\label{Cayley DGA}
    Y  = Y(A,\Lambda,\mu,U) := Y(M,U) := \mu + (I-A)(I + A)^{-1}\Lambda U,
\end{equation}where $M$ represents the tuple, $(A,\Lambda,\mu)$. The Cayley transform then allows us to fit the data, and ensure that the values fit result in a valid covariance matrix. However, we note that this DGA contains no information about the specific family of models which we would like to fit, and that it is overspecified: for a vector of $n$ samples, we have $n + \frac{n(n+1)}{2}$ parameters. 

We now have two DGAs, each with their own advantages. The DGA formulated in Equation (\ref{parametricDGA}) specifies the model of interest, but may be difficult to write in an explicit form. The DGA formulated in Equation (\ref{Cayley DGA}) can be easily written and is completely general, but is not restricted only to the models of interest. As such, our aim is to utilize the DGA of Equation (\ref{Cayley DGA}), but constrain it only to covariance matrix and mean vectors permitted by the model. This is where we may use Constrained Generalized Fiducial Inference results of \citet{murph2022generalized}. Starting from the GFD given in Equation (\ref{GFD}), we can apply a change of variable to transfer from the $\theta$ space to the $(A,\Lambda)$ space. This results in the distribution
\begin{equation}\label{first fiducial dist}
    r(M|y) =\frac{f(y|M)D^*(\nabla_{M} Y(u,M)|_{u = Y^{-1}(y,M)}P_{M})}{\int_\mathcal{M} f(y|M')D^*(\nabla_{M'} Y(u,M')|_{u = Y^{-1}(y,M)}P_{M'})dM'}, \quad M \in \mathcal{M},
\end{equation}
where $\mathcal{M}$ is the manifold within the parameter space of $M$ corresponding to values which keep the mean vector and covariance matrix in the family allowed by the parameterization $\theta$. We call this distribution the Generalized Constrained Fiducial Distribution (GCFD). $P_{M}$ is the projection matrix which projects orthogonally onto the manifold $\mathcal{M}$, thereby constraining the values of the mean vector and covariance matrix to only those allowed by the model. Following \citet{murph2022generalized}, we note that $D^*(\nabla_{M} Y(z,M)|_{z = Y^{-1}(y,M)}P_{M}) = D(\nabla_{M} Y(z,M)|_{z = Y^{-1}(y,M)}Q_{M})$ where $Q_M$ is the compact SVD of $P_M$ such that $Q_M Q_M^T = P_M$.

We return now to the complication that in order to use the Cayley transform, $-1$ cannot be a characteristic root of $S$. First, apply some arbitrary order to the set of signature matrices, whose determinant is $1$, so that we may specify them as $Z_i$, $i \in 1,...,2^{d-1}$. Although there are $2^d$ total possible signature matrices, we will only need to consider those with determinant 1, thus we may reduce the ordering to only $2^{d-1}$ elements. This detail is further discussed in Section \ref{pseudo}. Define $\Theta_i$ as the subset of $\Theta$ such that for all $\theta \in \Theta_i$, $S(\theta)Z_i$ does not have $-1$ as a characteristic root. In order to continue to use the basis provided by \citet{murph2022generalized}, we must split $\Theta$ into each $\Theta_i$ so that we may still smoothly transition from the $\theta$ space to the $(A,\Lambda)$ space. Since the $\Theta_i$ are not disjoint, this means we target the distribution
\begin{equation}\label{summed}
    \frac{\sum_{i=1}^{2^{d-1}} f(y|M_i)D^*(\nabla_{M} Y(u,M_i)|_{u = Y^{-1}(y,M_i)}P_{M_i})1_{\mathcal{M_i}}(M_i)}{\sum_{i=1}^{2^{d-1}}\int_{\mathcal{M}_i} f(y|M_i')D^*(\nabla_{M} Y(u,M_i')|_{u = Y^{-1}(y,M_i)}P_{M_i'})dM_i'}, \quad M \in \mathcal{M},
\end{equation} where $\mathcal{M}_i$ corresponds to $\Theta_i$.

In order to utilize the above distribution, we need to find the form of three quantities: $\nabla_{M}Y|_{u = Y^{-1}(y,M)}$, $P_{M}$ and the change of variable Jacobian. The first of these quantities, $\nabla_{M}Y|_{u = Y^{-1}(y,M)}$ is straightforward to calculate. Given the DGA as written in Equation (\ref{Cayley DGA}), we may write

\begin{equation*}
    \frac{\partial}{\partial \Lambda_{i,i}} Y(M,U) = (I-A)(I + A)^{-1}J^{i,i}U
\end{equation*}
and, 
\begin{align*}
    \frac{\partial}{\partial A_{i,j}} Y(M,U) &= \left(\frac{\partial}{\partial A_{i,j}}\left[I-A\right](I-A)^{-1} + (I-A)\frac{\partial}{\partial A_{i,j}}[(I+A)^{-1}]\right)\Lambda U\\
    &= -2(I+A)^{-1}(J^{i,j} - J^{j,i})(I + A)^{-1}\Lambda U,
\end{align*}
where $X_{i,j}$ is the element of a matrix $X$ at row $i$, column $j$ and $J^{i,j}$ is a matrix with all entries equal to zero, except for a 1 at row $i$, column $j$. We have
\begin{equation*}
    \frac{\partial}{\partial \mu_i} Y(M,U) = K_i
\end{equation*}
where $K_i$ is a vector of zeros, except for a $1$ at the $i$th position. The above formulas give us $\nabla_{M}Y$. To restrict the value of U we solve the DGA for $U$, leading to
\begin{equation*}
    U = \Lambda^{-1}(I+A)^{-1}(I-A)(y-\mu)
\end{equation*}
where $y$ are our data. Finally, we have
\begin{equation*}
    \frac{\partial}{\partial \Lambda_{s,s}} Y|_{u = Y^{-1}(y,M)} = (I-A)(I + A)^{-1}J^{s,s}\Lambda^{-1}(I+A)(I-A)^{-1}(y-\mu)
\end{equation*}
and 
\begin{equation*}
     \frac{\partial}{\partial A_{i,j}} Y|_{u = Y^{-1}(y,M)} = -2(I+A)^{-1}(J^{i,j} - J^{j,i})(I-A)^{-1}(y-\mu)
\end{equation*} and the derivative with respect to the $\mu$ parameters remaining unchanged. With these equations, we are able to build $\nabla_{M}Y|_{u = Y^{-1}(y,M)}$.

The next step to calculating the GCFD is calculating $P_{M}$. Let $G: \mathbb{R}^p \rightarrow \mathbb{R}^\frac{d(d+1)}{2}$ be the function which takes in our parameter vector $\theta$ and returns the upper triangular entries (that is, the unique entries) of the covariance matrix $\Sigma$. Similarly, define $H:\mathbb{R}^\frac{d(d+1)}{2} \rightarrow \mathbb{R}^\frac{d(d+1)}{2}$ to be the function which takes in the unique parameters of $A$ and $\Lambda$, that is, the $\frac{d(d-1)}{2}$ elements of A above the diagonal and the $d$ diagonal elements of $\Lambda$, and returns $\Sigma$. Note then, that $H$ will be the same, regardless of the specific model being used. Then, we have the relationship:
\begin{equation}\label{parameterizations}
    \theta \xrightarrow{G} \Sigma \xleftarrow{H} M.
\end{equation}
Note, however, that we cannot directly invert $H$, as we may potentially choose from a number of signature matrices $Z_i$. Thus, we write the function $H_i$ as the $H$ function associated with the signature matrix $Z_i$, so that the function $H_i^{-1}(G(\theta))$ defines our manifold in terms of $M$. From this, we have that the form for our projection matrix is 
\begin{equation*}
    P_{M} = (\nabla_{M}H_i)^{-1}\nabla_\theta G( \nabla_\theta G^T(\nabla_{M}H_i)^{-T} (\nabla_{M}H_i)^{-1}\nabla_\theta G)^{-1} \nabla_\theta G^T(\nabla_{M}H_i)^{-T}.
\end{equation*}
While in theory this is sufficient to write our GCFD, it will later be advantageous to formulate $Q_{M}$, such that $Q_MQ_M^T=P_M$. To do so, we define a $\frac{d(d+1)}{2} \times p$ matrix $V$ such that $VV^T = ( \nabla_\theta G^T(\nabla_{M}H_i)^{-T} (\nabla_{M}H_i)^{-1}\nabla_\theta G)^{-1}$. Note that the assumption that $p$ is the minimum number of parameters needed to parameterize covariance matrix assures that we can find such a matrix. For the moment, we need not find any closed form, just define this relationship. As such, we have $Q_{M} = (\nabla_{M}H)^{-1}\nabla_\theta GV$.

Finally, we need to define a change of variable Jacobian. We use the relationships defined in (\ref{parameterizations}), with the only complication that we need to account for the difference in dimensions between the parameterizations. Namely, since our manifold is actually $p$ dimensional, we need to project $\nabla_{M}H$ down to $p$ dimensions. Just as before, we could formulate this as $D^*(\nabla_{M}HP_{M})$ but instead we use $D(\nabla_{M}HQ_{M})$. This gives us the determinant of the change of variable Jacobian:
\begin{equation*}
\det(Q_{M}^T\nabla_{M}H^T\nabla_{M}HQ_{M})^{-1/2}\det(\nabla_\theta G^T\nabla_\theta G)^{1/2}.
\end{equation*}
With this, we can finally write the GCFD in terms of $M$:
\begin{align*}
    r(M|y) &\propto f(y|M)D(\nabla_{M}YQ_{M})D(\nabla_{M}HQ_{M})^{-1}D(\nabla_\theta G)\\
    &= f(y|M)D(\nabla_{M}Y(\nabla_{M}H)^{-1}\nabla_\theta G),
\end{align*}
where the second equality comes from the fact that $\nabla_\theta GM$ is square.

\section{Pseudocode}\label{pseudo}
In this section, we make explicit a few details of the algorithm and introduce some notation. Recall that the decomposition of $\Sigma$ is unique only up to multiplication of $S$ by some signature matrix $Z$, and that the Cayley transform can only be applied to orthogonal matrices which do not have -1 as an eigenvalue. As such, not all signature matrices can be used for the decomposition at each value $\theta$. The signature matrices have a non-trivial effect on the unnormalized distributions. The majority of the details of the algorithm are concerned with accounting for the signature matrices.

The idea behind this algorithm is simple: we propose a new value for $\theta$, and evaluate it after being parameterized by $(A,\Lambda)$. Since this parameterization by $(A,\Lambda)$ is only unique up to a signature matrix, we evaluate the proposal at a number of these signature matrices and average out their effects. We denote the number of sampled signature matrices by $k$. In higher dimensions, there are so many possible signature matrices that even with a reasonable $k$, if signature matrices were chosen uniformly at random, a proposal might have no signature matrices in common with the current position in the parameter space. In that case, it could be the selected signature matrices, not the parameters and data, which drives the acceptance or rejection step of the MCMC algorithm. To deal with this, we randomize only a proportion of the signature matrices to force some overlap in the signature matrices. We denote the number of signature matrices we keep by $q$. The signature matrices that are kept are chosen uniformly at random, and the new matrices are also chosen uniformly at random from all available signature matrices, with replacement. It is possible that of the $k$ sampled matrices, none are permissible by the proposal $\theta'$. In this case, we consider the algorithm to have left the parameter space and immediately reject the proposal. 

We also note that we do not need to sample all possible matrices. Recall that the matrix $S$ is orthonormal, and thus has determinant $\pm1$, and its eigenvalues lie on the complex unit circle. The same is true of $SZ_i$. Additionally, the existence of a strictly complex eigenvalue implies that the conjugate of the eigenvalue must also itself be an eigenvalue. As such, if $SZ_i$ has determinant $-1$, then $-1$ must be a characteristic root of the matrix. This means that we only need to sample from $2^{d-1}$ possible matrices, namely those with the same determinant as $S$. To address the issue that $Z_i$ will only work for $S$ with a determinant of one sign, when we begin a new proposal if $S$ has a negative determinant, we multiply the quantity by the signature matrix with $-1$ in the first diagonal position and $1$ in all the rest. We call this signature matrix $Z_1$. That way, we only consider signature matrices with positive determinant. With these details clarified, we present the pseudocode for a single step in the MCMC algorithm.

\begin{algorithm}
\caption{Single Step of Fiducial MCMC Pseudocode}\label{alg:cap}
\SetKwInOut{Input}{input}\SetKwInOut{Output}{output}
\Input{$\theta_n$, $\{Z_i\}$, $k$, $p<k$, pl = previous likelihood, ps = previous sum}
Generate new proposal, $\theta'$\, from normal distribution centered at $\theta_n$\;
\eIf{$\theta$ falls outside the parameter space}{
    $\theta_{n+1} = \theta_n$\;
    }{
    $\Sigma$ = covariance matrix at $\theta'$\;
    $[S,D]$ = svd($\Sigma$)\;
    \If{$\det(S) == -1$}{
    $S = SZ_1$
    }
    l = log-likelihood function at proposal\;
    Sum $= 0$\;
    $\{Z_i^*\} = p$ uniformly selected $Z_i$ from $\{Z_i\}$\;
    $\{\tilde{Z}_i\} = k-q$ uniformly selected $Z_i$ from all $2^{d-1}$ possible $Z_i$\;
    $\{Z_i'\} = $ concatenation of $\{Z_i^*\}$ and $\{\tilde{Z}_i\}$\;
    \For{$Z_i' \in \{Z_i'\}$}{
        \If{$SZ_i'$ does not have $-1$ as a characteristic root}{
        $A_i = (I - SZ_i')/(I + SZ_i')$\;
        $P$ = projection matrix in $(A_i,\Lambda)$ space\;
        $Q$ = svd($P$)\;
        Sum = Sum + Determinant Terms of GFCD\;
        }
    }
    \eIf{Sum $== 0$}{
    $\theta_{n+1} = \theta_n\:$
    }{MHR = Sum*l/(pl*ps)\;
    \eIf{MHR $\geq$ \text{Uniform[0,1]}}{
    $\theta_{n+1} = \theta'$\;
    $\{Z_i\} = \{Z_i'\}$\;
    }{
    {$\theta_{n+1} = \theta_n$\;}
    }}
    }
\end{algorithm}

The advantage of this algorithm is its ease of use and its application to a large family of models. The user need only provide three functions: $G(\theta)$, $\nabla_\theta G(\theta)$ and a function which checks that a given $\theta$ is a valid set of parameters for the given family (i.e. a correlation parameter is between -1 and 1, or a variance parameter is non-negative). In its current implementation, the step size for the proposal is generated by a centered Gaussian distribution, though this is not strictly necessary. It is recommended that the user specify what variance should be used for this proposal step. For example, a correlation parameter which takes values between -1 and 1 may need smaller steps than a variance parameter which may take any positive value.

\section{Simulations}\label{Fiducial simulations}
To support the analysis of our algorithm, we run several simulations using the aforementioned MA(1) and Mat\'ern covariance matrices. The first model we consider is the MA(1) model. The covariance matrix is parameterized by $\theta = (\rho, \sigma^2)$. To ensure that the model is identifiable, the parameters are restricted to $-1 \leq \rho \leq 1$ and $0 < \sigma^2$. We generated 20 independent observations of a length 50 MA(1) vector. The data were generated with parameter values of $\sigma^2 = 6$ and $\rho = .5$. The covariance matrix takes the form

\begin{equation}
    \begin{bmatrix} 
    7.5 & 3 & 0 & 0 &\dots & 0 &0 \\
    3 & 7.5 & 3 & 0 &\dots & 0 &0 \\
    0 & 3 & 7.5 & 0 &\dots & 0 &0\\
    0 & 0 & 3 & 7.5 & \dots & 0 & 0\\
    
    \vdots & \vdots & \vdots & \vdots &\ddots & \vdots & \vdots\\
    0 & 0 & 0 &  0 & \dots & 7.5 & 3\\
    0 & 0 & 0 &  0 & \dots & 3 & 7.5
    \end{bmatrix}.
\end{equation}

Other than a function which checks that the parameters are within the parameter space, we provide only two functions unique to the MA(1) process: a function which, given $(\rho,\sigma^2)$, returns the covariance matrix, and a function that returns $\nabla G(\rho,\sigma^2)$, or more precisely the first derivative of the upper triangular elements of the covariance matrix taken with respect to both parameters. It is easy to show that this is:

\begin{equation}
    \begin{bmatrix} 
     1+\rho^2 & \rho & 0 & \dots & 1+\rho^2 & \rho & 0 & 1+\rho^2 & \rho & 1+\rho^2\\
     2\sigma^2\rho & \sigma^2 & 0 & \dots & 2\sigma^2\rho & \sigma^2 & 0 & 2\sigma^2\rho & \sigma^2 &  2\sigma^2\rho
    \end{bmatrix}^T.
\end{equation}

We initialize the algorithm at the value of $\theta = (.8, 2)$ far from the true values used to generate the data. This is done to demonstrate the algorithm's ability to stabilize around the correct values. Although a practitioner would usually choose to initialize the algorithm at an MLE or some other well-performing estimator, the performance of the algorithm in this case supports the conclusion that the algorithm will converge to the correct values independent of the initialization. The  MCMC algorithm is run for 6000 total steps, considering the first 1000 to be initialization steps and excluding them from any calculations. We repeat the experiment 200 times to provide some estimate of the uncertainty. Figure \ref{fig:MA(1) sims} shows layered histograms which display both the MCMC estimates as well as the MLE for estimates for both parameters for each of the 200 runs. In Figure \ref{fig:MA(1) scatter}, we also provide a scatter plot which again compares the MLE estimates and MCMC estimates, such that we can observe that the joint distribution also appears normal. From these two figures, we provide evidence that the distribution of the MLE coincides with that of our MCMC estimates, supporting the idea that our method leads to accurate estimation and indeed that the conditions for the fiducial Bernstein-von Mises result in Section \ref{Fiducial proofs} are met. We also report an average acceptance rate of 29.45\%, with a minimum acceptance rate of 16.94\% and a maximum of 42.52\%. Table \ref{estimated vals} summarizes the results from this experiment.

One advantage of the fiducial approach is uncertainty quantification. That is, because the output of our algorithm is a Markov chain, we have a distribution as our estimator of the parameters, as opposed to a point estimate in the case of the MLE. We record the 2.5th and 97.5th quantiles of the chain for all runs. We see that for the $\rho$ parameter, this interval produces a coverage probability of 80.5\%, and 93\% for the $\sigma^2$ parameter. The joint coverage probability is 74.5\%. These values are slightly lower than expected, likely due to a combination of a relatively small sample size in the number of runs, and short MCMC chains. Table \ref{estimated vals} also reports the average 2.5th and 97.5th quantiles. 


\begin{figure}
\centering
\begin{minipage}{.5\textwidth}
\centering
     \includegraphics[width=\textwidth]{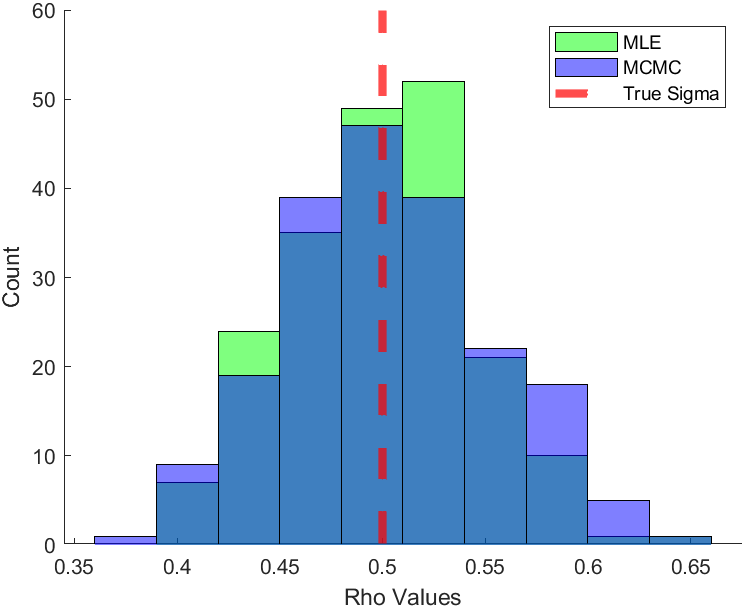}
\end{minipage}\hfill
\begin{minipage}{.5\textwidth}
    \centering
    \includegraphics[width=\textwidth]{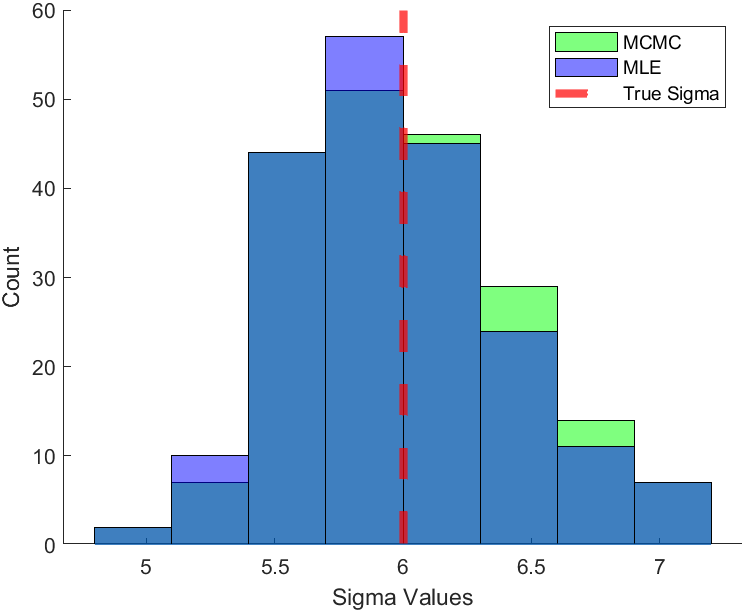}
\end{minipage}
\caption{Left: Estimate of $\rho = .5$ from 200 runs of our algorithm, compared to the MLE for the same data. Right: Estimate of $\sigma^2 = 6$ from 200 runs of our algorithm, compared to the MLE for the same data.}
\label{fig:MA(1) sims}
\end{figure}

\begin{figure}
    \centering
    \includegraphics[width=0.75\linewidth]{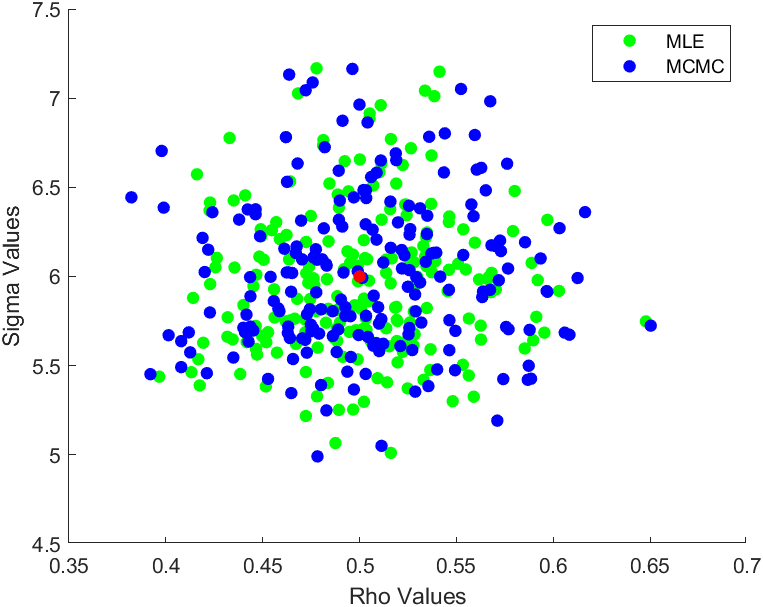}
    \caption{Scatter plot of estimated parameter values for each of the 200 runs on simulated MA(1) data. The red dot indicates the true parameter value.}
    \label{fig:MA(1) scatter}
\end{figure}


We also run another simulation on the more complicated Mat\'ern covariance matrix. We assume we observe 50 sites, arranged in a 5 by 10 unit grid with random jitter. Our data consist of 50 replications of this length 50 vector with parameters $(\nu,\sigma^2,\rho) = (2,6,1)$. It is well known that for the Mat\'ern model, the $\sigma^2$ and $\rho$ parameters are not separately identifiable, which causes correlated posteriors for MCMC approaches, see \citet{stein99} and \citet{Zhang2004}. To alleviate the effects of this on our algorithm, we amend the proposal step slightly, to take a step in one of the parameters, leaving the other two the same. Each step, it rotates between which parameter has a proposed step. We run the simulation for 5000 steps, starting from the true parameter value. Once again, we see that the estimates produced by our procedure have similar performance to the MLE. Figure \ref{fig:Matern sims} shows the  histograms for the Mat\'ern parameters, and Table \ref{estimated vals} reports the MLEs and MCMC estimates for both the MA(1) and Mat\'ern models. Figure \ref{fig:Matern scatter} also provides scatter plots for the MCMC and MLE estimates of the parameters on each run, for each pair of parameters. The average acceptance rate was 22.48\%, with a minimum of 19.02\% and a maximum of 25.40\%. We report a coverage probability by the 97.5th and 2.5th percentiles of 69.5\% for $\nu$, 91\% for $\sigma^2$, and 67\% for $\rho$, and a joint coverage of 47\%. We again note that these coverage probabilities would likely improve from and increased number of runs, as well as longer chains.

We address one of the potential weaknesses of this method: the computational load as array length increases. Since the method relies on overspecifying the mean and covariance matrices, this means the number of overspecifying parameters scales quadratically with array length. Further, the algorithm requires inversion and multiplication of matrices whose size is specified by the number of overspecifying parameters. Here, we report the run times for the models at different array lengths. All experiments were performed on an AMD Ryzen 7735U (8 cores, 16 threads), and 16 GB RAM, running Windows 11. The implementation was written in MATLAB R2023a. The simulations shown above were executed using parallel processing with 8 workers, parallelizing along the 200 replicates. For the MA(1) simulations, the average run time was 2117.8 seconds, with a minimum of 1746.8 seconds and a maximum of 2479.8 seconds. The Mat\'ern simulations took considerably longer with an average run time of 9143.1 seconds, a minimum of 8439.8 seconds and a maximum of 11049 seconds. This increase in run time is due to the increased complexity of the Mat\'ern covariance structure, as the Mat\'ern requires calls to the Bessel function to determine the covariance values, and only a few entries in the relevant large matrices are identically zero, unlike in the MA(1) covariance structure.

There are several modifications one could make to the algorithm which could potentially provide computational benefits, but which are outside of the scope of this work. First, in the current implementation, the step size of the MCMC is constant. Implementing a Hamiltonian MCMC version could allow a practitioner to reduce the number of steps of the chain, though this would have to be balanced with the cost of computing the step sizes. Further, parallelization could provide a significant reduction in time, as the algorithm proposed is embarrassingly parallel over the number of signature matrices sampled at each step. In testing the benefits of this parallelization, we see that for a small number of signature matrices (recall that we use only 8 for all the simulations above), any computational benefit is negated by the cost of scheduling the parallelization, and thus this may only prove beneficial for larger array sizes, where a larger number of signature matrices could be advisable. 

One practical adjustment to the method could be to use an approach similar to composite likelihood. We note that while the algorithm scales poorly in the size of the arrays, in the number of arrays, i.e. independent observations, the run time scales quite well, as this affects only the evaluation of the likelihood. Where appropriate, as for stationary time series data, one could split a sample as in composite likelihood: splitting the sample into many smaller observations as opposed to a full likelihood. We test this method by once again simulating MA(1) data, this time as four independent observations of length 100. We again run our fiducial MCMC algorithm on the original $4 \times 100$ data, but also split each observation of length 100 into 81 observations of length 20, leading to a $241 \times 20$ matrix. Figure \ref{compositeLikelihood} visualizes this transformation of the data in a smaller case. We initiate the chain at the true value of $\theta = (.5,6)$, and run our algorithm for 6000 total iterations, 1000 being burn-in, on both of these data sets. We can immediately see the computational benefit, as the full likelihood approach took 12200.71 seconds (over three and a quarter hours), while the composite likelihood approach took 209.37 seconds (just over three minutes).
\begin{figure}
\begin{equation*}
    \begin{bmatrix}
        a_1 & b_1 & c_1 & d_1 & e_1 & f_1 & g_1\\
        a_2 & b_2 & c_2 & d_2 & e_2 & f_2 & g_2
    \end{bmatrix} \implies 
    \begin{bmatrix}
        a_1 & b_1 & c_1 & d_1 & e_1\\
        b_1 & c_1 & d_1 & e_1 & f_1\\
        c_1 & d_1 & e_1 & f_1 & g_1\\
        a_2 & b_2 & c_2 & d_2 & e_2\\
        b_2 & c_2 & d_2 & e_2 & f_2\\
        c_2 & d_2 & e_2 & f_2 & g_2\\
    \end{bmatrix}
\end{equation*}
\caption{Visualization of the modification of the data for the composite likelihood approach. Here a $2 \times 7$ matrix becomes a $6 \times 5$ matrix.}
\label{compositeLikelihood}
\end{figure}

Beyond computational efficiency, we see that in many ways the composite likelihood approach performs better than the full likelihood. The means of the MCMC chains were similar and both fairly accurate: $(.5804, 5.5654)$ for the composite likelihood and $(.5551, 5.8342)$ for that of the full likelihood. However, we see that the mixing was affected by array size. Figure \ref{fig:CompositeTraceplots} displays the trace plots for both datasets. We note that the composite data resulted in a 43.58\% acceptance rate, while the full likelihood resulted in only 15.02\%. This is likely due to the number of signature matrices sampled at each step, as it was kept constant at 8 for both datasets. Since the total possible number of signature matrices is $2^{d-1}$, where $d$ is the array length, the poor mixing is likely due to not sampling enough signature matrices at the larger array size. In principle, one can address this by increasing the number of signature matrices sampled at each iteration, though this will only increase the computational load.

\begin{figure}
\centering
\begin{minipage}{.5\textwidth}
\centering
    \includegraphics[width=\textwidth]{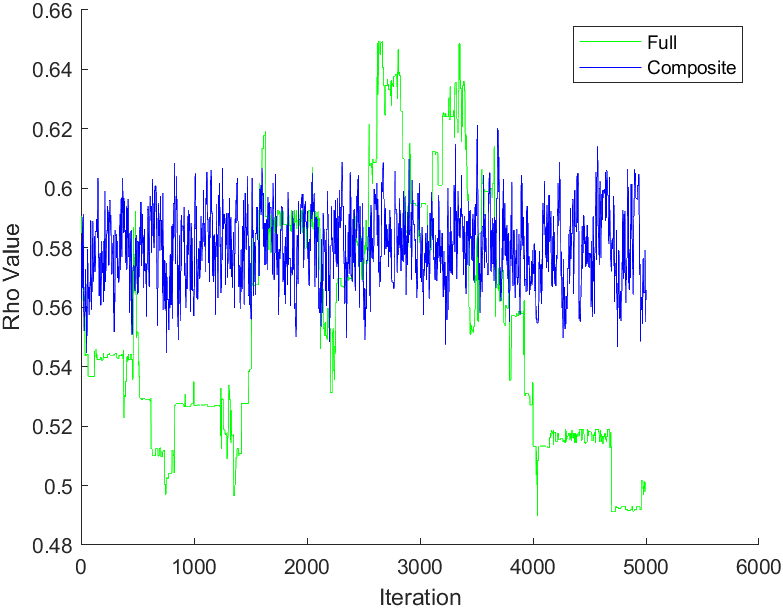}
\end{minipage}\hfill
\begin{minipage}{.5\textwidth}
\centering
    \includegraphics[width=\textwidth]{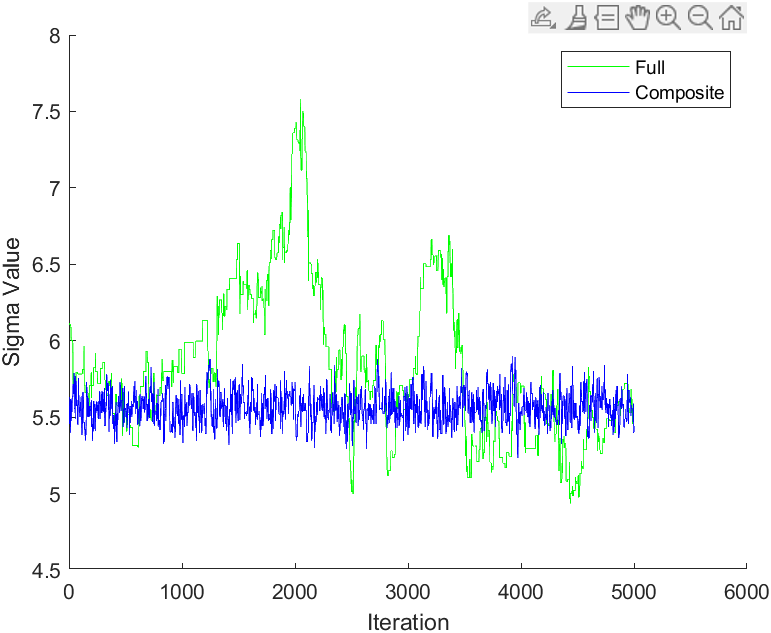}
\end{minipage}
\caption{Trace plots for both parameters of the MA(1) for the full likelihood and composite likelihood approaches. Note that the mixing of the composite likelihood is good, while that of the full likelihood is subpar.}
\label{fig:CompositeTraceplots}
\end{figure}

\begin{figure}
\centering
\begin{minipage}{.33\textwidth}
\centering
     \includegraphics[width=\textwidth]{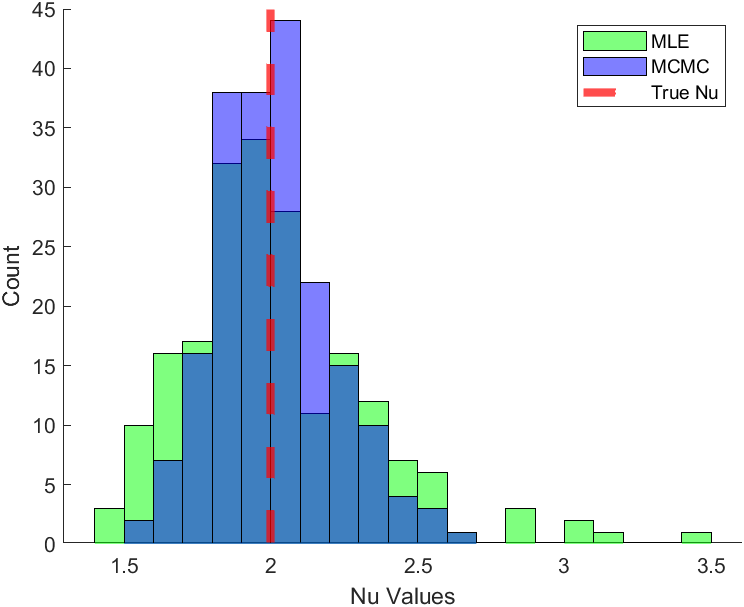}
\end{minipage}\hfill
\begin{minipage}{.33\textwidth}
    \centering
    \includegraphics[width=\textwidth]{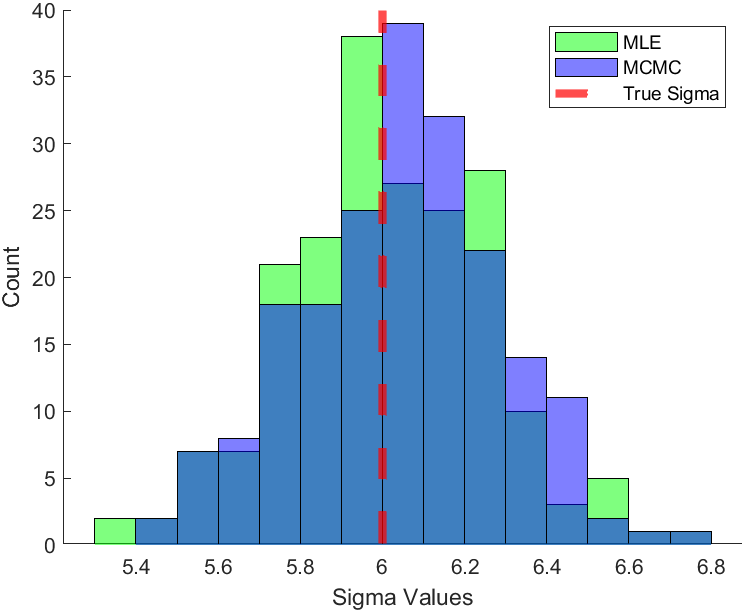}
\end{minipage}
\begin{minipage}{.33\textwidth}
    \centering
    \includegraphics[width=\textwidth]{EstimatedRhos.png}
\end{minipage}
\caption{Estimates of Mat\'ern parameters from 200 runs of our algorithm, compared to the MLE for the same data. From left to right, the estimated parameters are $\nu, \sigma^2,\rho$, with true values of $2,6$, and $1$, respectively.}
\label{fig:Matern sims}
\end{figure}

\begin{figure}
\centering
\begin{minipage}{.33\textwidth}
\centering
     \includegraphics[width=\textwidth]{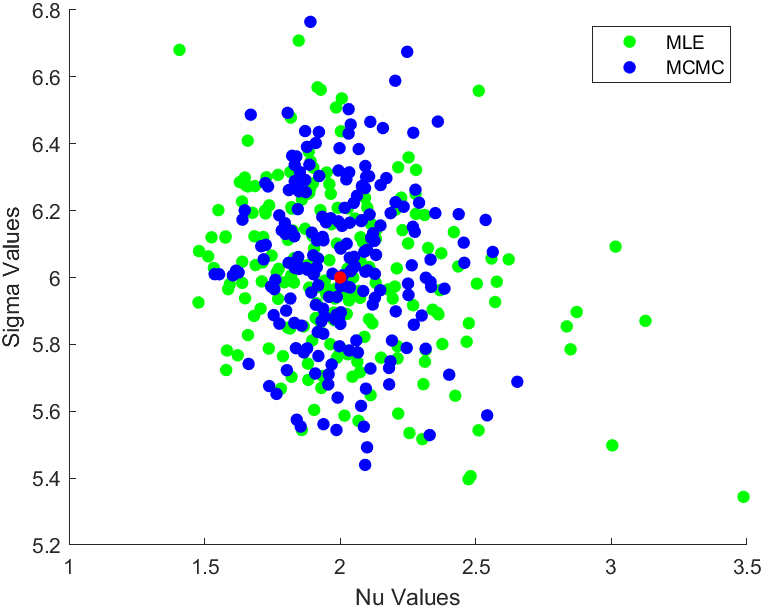}
\end{minipage}\hfill
\begin{minipage}{.33\textwidth}
    \centering
    \includegraphics[width=\textwidth]{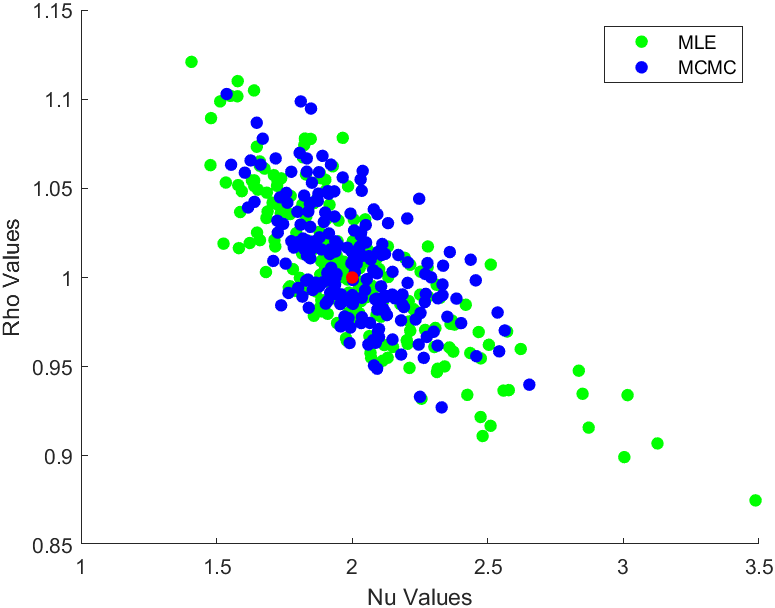}
\end{minipage}
\begin{minipage}{.33\textwidth}
    \centering
    \includegraphics[width=\textwidth]{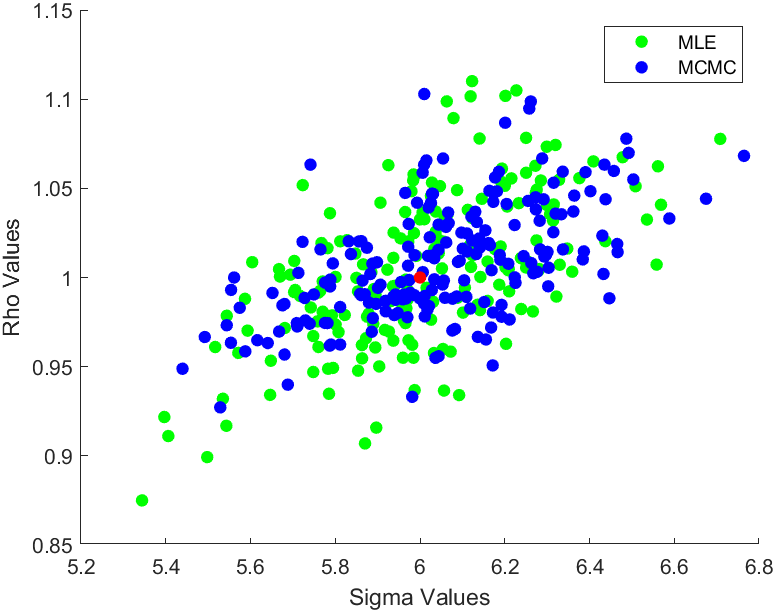}
\end{minipage}
\caption{Scatter plot of estimated parameter values for each of the 200 runs on simulated Mat\'ern data. The red dots indicate the true parameter values.}
\label{fig:Matern scatter}
\end{figure}

\begin{table}
\centering
\begin{tabular}{l|l|l|l|l|l|}
\cline{2-6}
Model  & Parameter & \begin{tabular}[c]{@{}l@{}}True\\ Value\end{tabular} & \begin{tabular}[c]{@{}l@{}}Average\\ Estimated Value\end{tabular} & \begin{tabular}[c]{@{}l@{}}2.5th\\ Quantile\end{tabular} & \begin{tabular}[c]{@{}l@{}}97.5th\\ Quantile\end{tabular} \\ \hline
MA(1)    & $\sigma^2$    & 6               & 6.0161   & 5.2563 & 6.8623    \\ \cline{2-6} 
       & $\rho$       & 0.5                & 0.5035   & 0.4396 & 0.5697     \\ \hline
Mat\'ern & $\nu$        & 2                & 2.0065   & 1.8051 & 2.2127    \\ \cline{2-6} 
       & $\sigma^2$    & 6               & 6.0477   & 5.6291 & 6.5004    \\ \cline{2-6} 
       & $\rho$       & 1                & 1.0084   & .9734 & 1.0429    \\ \cline{2-6} 
\end{tabular}
\caption{Table of average parameter estimates and quantiles for each parameter, for both models.}
\label{estimated vals}
\end{table}
\section{Proof} \label{Fiducial proofs}
In this section we provide proofs regarding our MCMC algorithm. Namely, we show that the stationary distribution of the MCMC algorithm is the pseudo-marginal distribution on $\theta$, which will equal the GFD on $\theta$, once the effect of the $\{Z_i\}$ is averaged out. Additionally, we call on previous results to argue a Bernstein-von Mises-like result for fiducial distributions, which argues that the fiducial distribution is asymptotically equivalent in total variation distance to a normal distribution, centered at a sufficient statistic of the parameter vector $\theta$.
\begin{proposition}
    Assume the stationary distribution
    \begin{equation*}
         \pi(\theta, \{Z_i\})=\frac{1}{m}f(x,\Sigma(\theta))\sum_{i=1}^kJ(A_i,\Lambda,\theta)1_{\{\theta,Z_i\}}\left(k2^{(k-1)(d-1)}\right)^{-1}.
    \end{equation*}
    The proposed algorithm satisfies detailed balance.
\end{proposition}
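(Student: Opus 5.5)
The plan is to treat the algorithm as a Metropolis--Hastings chain on the extended state $(\theta,\{Z_i\})$, where $\{Z_i\}$ is the ordered $k$-tuple of currently retained signature matrices. We then verify detailed balance in the standard way:
\begin{equation*}
\pi(\theta,\{Z_i\})\,K\big((\theta,\{Z_i\}),(\theta',\{Z_i'\})\big)=\pi(\theta',\{Z_i'\})\,K\big((\theta',\{Z_i'\}),(\theta,\{Z_i\})\big)
\end{equation*}
for every pair of distinct states. Here $K$ is the one-step kernel. It factors into three pieces: the Gaussian proposal density $g(\theta'\mid\theta)$, the probability $R(\{Z_i'\}\mid\{Z_i\})$ of the signature-matrix refresh, and the acceptance probability $\min\{1,\mathrm{MHR}\}$. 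Transitions to the same state satisfy detailed balance trivially. This covers proposals outside $\Theta$ and proposals for which every $SZ_i'$ has $-1$ as an eigenvalue (Sum $=0$), since in both cases the chain stays put. The first concrete step is to note that $\pi$ vanishes on any state with $\sum_i J(A_i,\Lambda,\theta)1_{\{\theta,Z_i\}}=0$, so those rejections never move mass between states of positive density.

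The second step deals with the proposal mechanism. The Gaussian proposal is symmetric, so $g(\theta'\mid\theta)=g(\theta\mid\theta')$. For the signature refresh, I would show the following. Choosing $q$ of the current $k$ matrices uniformly to keep, then drawing $k-q$ new ones uniformly with replacement from the $2^{d-1}$ positive-determinant signature matrices, gives $R(\{Z_i'\}\mid\{Z_i\})=R(\{Z_i\}\mid\{Z_i'\})$. The argument is that any pair of tuples sharing a common kept subset can be reached in either direction with the same combinatorial count. The total number of fresh configurations is $2^{(k-1)(d-1)}$-type, which is exactly the normalizing factor $(k2^{(k-1)(d-1)})^{-1}$ that appears in $\pi$. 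That factor is the same in numerator and denominator, so it cancels regardless. Once symmetry of $g$ and $R$ is in hand, the proposal ratio equals one.

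The third step checks that the acceptance ratio is the target ratio. In the algorithm, $l$ is the likelihood $f(x,\Sigma(\theta'))$ and Sum is $\sum_i J(A_i',\Lambda',\theta')1_{\{\theta',Z_i'\}}$, with pl and ps the corresponding stored values at the current state. Hence
\begin{equation*}
\mathrm{MHR}=\frac{\pi(\theta',\{Z_i'\})}{\pi(\theta,\{Z_i\})},
\end{equation*}
because the constants $1/m$ and $(k2^{(k-1)(d-1)})^{-1}$ cancel. Applying the identity $a\min\{1,b/a\}=\min\{a,b\}=b\min\{1,a/b\}$ then gives detailed balance.

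The sign correction $S\mapsto SZ_1$ when $\det S=-1$ needs a brief remark. It is a deterministic function of $\theta'$ and does not change $\Sigma$, so it enters $J$ only through a fixed relabeling and does not break symmetry. The main obstacle is the symmetry of the refresh kernel $R$. Retaining a uniformly chosen $q$-subset and resampling the rest is not obviously reversible for ordered tuples, because of repeated entries under with-replacement sampling and because the count of kept subsets compatible with a given pair can differ. My first approach would be to treat the state as an unordered multiset, or to augment the state with the indices of the kept positions. Under that augmentation the move becomes a symmetric involution-like proposal, and I would then marginalize. If exact symmetry still fails, I would instead include the ratio $R(\{Z_i\}\mid\{Z_i'\})/R(\{Z_i'\}\mid\{Z_i\})$ in the argument and show it equals one on the support of $\pi$.
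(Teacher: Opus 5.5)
Your proposal follows the paper's proof. Both use a symmetric Gaussian proposal, a symmetric signature refresh, and an acceptance ratio equal to $\pi(\theta',\{Z_i'\})/\pi(\theta,\{Z_i\})$, so that $\pi p$ equals $\frac{1}{m}\min[\cdot,\cdot]$ times a factor symmetric in the two states. The refresh symmetry you flag as the main obstacle is settled in the paper by the explicit probability $\binom{w}{q}\binom{k}{q}^{-1}(2^{-(d-1)})^{k-q}$, which is exact and symmetric when $\{Z_i\}$ is read as an ordered $k$-tuple whose kept entries stay in their slots: $w$ then counts agreeing coordinates, and repeated entries cause no trouble. Your multiset fallback would actually break symmetry: for $k=2,q=1$, the move $\{a,a\}\to\{a,b\}$ has probability $1/2^{d-1}$ but the reverse has half that. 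So the positional reading is the one to adopt.
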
Before we being the proof, recall the definition of detailed balance. If a Markov chain defined on $\Theta$ has transition probability $p(\theta,\theta')$ and stationary distribution $\pi(\theta)$ for $\theta,\theta' \in\Theta$. Detailed balance is satisfied if
\begin{equation*}
    \pi(\theta)p(\theta,\theta') = \pi(\theta')p(\theta',\theta).
\end{equation*}
\noindent \textit{Proof.}
Let $\left(\theta,\{Z_i\}\right)$ be the current position of the algorithm and $\left(\theta',\{Z_i'\}\right)$ be the proposal. Let $k$ be the number of signature matrices sampled at each step of the algorithm, $q$ be the number of signature matrices kept between transitions, $w$ be the number of signature matrices shared by $\{Z_i\}$ and $\{Z_i'\}$. Also let $A_i$ be the skew-symmetric matrix of the Cayley transform corresponding to the signature matrix $Z_i$. The density of the transition from $\theta$ to $\theta'$ is given by the normal distribution. To transition from $\{Z_i\}$ to $\{Z_i'\}$ (conditioned on $\theta$ and $\theta'$), then we must select $q$ of the $w$ shared matrices to be kept between transitions, and then select the other $k-q$ at random from the $2^{d-1}$ possible signature matrices. Then, the transition probability density from $\left(\theta,\{Z_i\}\right)$ to $\left(\theta',\{Z_i'\}\right)$ is given by
\begin{align*}
    &p\left(\theta,\{Z_i\},\theta',\{Z_i'\}\right) =\\
    &\phi(\theta,\theta')\binom{w}{q}\binom{k} {q}^{-1}\left(\frac{1}{2^{d-1}}\right)^{k-q}\min\left[1,\frac{f(x,\Sigma(\theta'))}{f(x,\Sigma(\theta))}\frac{\sum_{i=1}^{k}J(A_i',\Lambda,\theta')1_{\{\theta',Z_i'\}}}{\sum_{i=1}^{k}J(A_i,\Lambda,\theta)1_{\{\theta,Z_i\}}}\right],
\end{align*}if $w\geq q$ and $\{Z_i'\}$ contains at least one signature matrix such that $1_{\{\theta',Z_i'\}}=1$, and is zero otherwise. Here, $\phi(\theta, \theta')$ represents the normal density centered at $\theta$ and evaluated at $\theta'$, and the combinatorial terms represent the probability of selecting the set $\{Z_i'\}$. Let $m$ be the normalizing constant for the generalized constrained fiducial distribution. The stationary distribution we are targeting is given by 
\begin{equation*}
    \pi(\theta, \{Z_i\})=\frac{1}{m}f(x,\Sigma(\theta))\sum_{i=1}^kJ(A_i,\Lambda,\theta)1_{\{\theta,Z_i\}}\left(k2^{(k-1)(d-1)}\right)^{-1},
\end{equation*}
where $m$ is the normalizing constant that makes this a valid probability distribution. Here, $\left(k2^{(k-1)(d-1)}\right)^{-1}$ is a normalizing constant related to the $Z_i$ whose form will be clear in later calculations. To analyze detailed balance, we write

\begin{align*}
    &\pi(\theta,\{Z_i\}) p(\theta,\{Z_i\},\theta',\{Z_i'\})\\
    &= \frac{1}{m}\min\left[f(x,\Sigma(\theta))\sum_{i=1}^{k}J(A_i,\Lambda,\theta)1_{\{\theta,Z_i\}},\left.f(x,\Sigma(\theta'))\sum_{i=1}^{k}J(A_i,\Lambda,\theta')1_{\{\theta',Z_i'\}}\right]\right.\\
    &\hspace{10mm}\cdot\left(k2^{(k-1)(d-1)}\right)^{-1}\phi(\theta,\theta')\binom{w}{q}\binom{k}{q}^{-1}\left(\frac{1}{2^{d-1}}\right)^{k-q}\\
    &= \pi(\theta',\{Z_i'\})p(\theta',\{Z_i'\},\theta,\{Z_i\})\\
\end{align*} Thus, detailed balance is satisfied. We now show that this pseudo-marginal distribution reduces to the GCFD when the effect of the $Z_i$ is integrated out. The change of variable formula gives us the relationship
\begin{equation*}
    J(\theta) = \sum_{i=1}^{2^{d-1}}J(A_i,\Lambda,\theta)1_{\{\theta,Z_i\}}.
\end{equation*}
Let $\sigma_i$ be the $i$th (of an arbitrary ordering) of the $2^{k(d-1)}$ possible combinations of the sets of $Z_i$. That is, $\sigma_i: \{1,..,k\} \rightarrow \{1,...,2^d\}^k$ is such that $\sigma_i(j)$ gives $Z_j$ in the $i$th combination. Then, we have

\begin{align*}
    &\sum_{i=1}^{2^{k(d-1)}}\frac{1}{m}f(x,\Sigma(\theta))\sum_{j=1}^kJ(A_{\sigma_i(j)},\Lambda,\theta)1_{\{\theta,Z_{\sigma_i(j)}\}}\left(k2^{(k-1)(d-1)}\right)^{-1}\\
    &= \frac{1}{m}f(x,\Sigma(\theta))\left(k2^{(k-1)(d-1)}\right)^{-1}\sum_{i=1}^{2^{k(d-1)}}\sum_{j=1}^kJ(A_{\sigma_i(j)},\Lambda,\theta)1_{\{\theta,Z_{\sigma_i(j)}\}}.
\end{align*}
There are $2^{k(d-1)}$ total combinations of $k$ matrices. By the symmetry of this double sum about $Z_i$, each $Z_i$ appears in the sum $k2^{(k-1)(d-1)}$ times. We write
\begin{align*}
    &= \frac{1}{m}f(x,\Sigma(\theta))\left(k2^{(k-1)(d-1)}\right)^{-1}k2^{(k-1)(d-1)}\sum_{i=1}^{2^{d-1}}J(A_i,\Lambda,\theta)1_{\{\theta,Z_i\}}\\
    &= \frac{1}{m}f(x,\Sigma(\theta))J(\theta),
\end{align*}
which is exactly the GCFD.
\qed

Because the distribution $\pi$ satisfies detailed balance, it is an invariant distribution of the Markov Chain. To show that this distribution is unique, we need to argue that the chain is aperiodic, and $\pi$-irreducible. Aperiodicity is clear, as the chain has a non-zero probability of remaining at the same point at each step. Recall that the definition of $\pi$-irreducible is that any set given positive probability by the stationary distribution $\pi$ is accessible in a finite number of steps from any point in the state-space. To argue that the Markov Chain is $\pi$-irreducible, we must first specify a starting point $(\theta_1, \{Z_i\})$, and a target set with positive mass, which we will decompose into a set $\Theta' \subset\Theta$ and $\mathcal{Z}$, a set of sets of signature matrices of size $k$. If the target has positive mass, then $\Theta'$ must have a non-empty interior, and at least one set, $\{Z_i'\}$ which has at least one permissible signature matrix somewhere on the interior of $\Theta'$. Without loss of generality, let $\{Z_i'\} \in \mathcal{Z}$ be such a set, and let $Z_1'$ be permissible somewhere on the interior of $\Theta'$. Consider the transition from $(\theta_1,\{Z_i\})$ to a point in $(\Theta_1 \cap \Theta')\times\{Z_1',...,Z_{k-q}',Z_{k-q+1},...,Z_{k}\}$. Then, we know that this is a valid transition, since any point in $\Theta$ is accessible from any other in a single step, and there exists at least one signature matrix, namely $Z_1'$ which is permissible. Then, the chain can transition, staying within $(\Theta_1 \cap \Theta')$ so that the transition is valid, and updating the $k-q$ signature matrices which change at each step until the step where the set of signature matrices is equal to $\{Z_i'\}$. This whole process can always be done in at most $\lceil\frac{k}{k-q}\rceil$ steps. Thus, the chain is also $\pi$-irreducible, and $\pi$ is the unique stationary distribution. 

We have that the Markov Chain has the pseudo-marginal distribution as its steady state solution. To show that this leads to a consistent estimator, we cite Theorem 3.1 of \citet{borgert2024}, which states that under regularity conditions, as the sample size $n\to \infty$
\begin{equation*}
    ||r(\theta|y)_{H_n|Y_n}- N(\Delta_{n,\theta_0},I^{-1}_{\theta_0})||_{TV}\to0.
\end{equation*}Here, $r(\theta|y)_{H_n|Y_n}$ represents the localized fiducial distribution constrained to a ball of radius dominated by $\sqrt{n}$, and $N(\Delta_{n,\theta_0},I^{-1}_{\theta_0})$ represents the normal distribution, with covariance matrix equal to the inverse of the Fisher Information and centered at a sufficient statistic $\Delta_{n,\theta_0}$. This result is essentially a Bernstein-von Mises result for the fiducial distribution. In other words, as the sample size grows the ``posterior-like" fiducial distribution will asymptotically approximate the normal distribution centered at the sufficient statistic. This ensures that the samples of our MCMC algorithm will asymptotically approximate such a sufficient statistic.

\citet{borgert2024} give the assumptions necessary for the Bernstein-von Mises result:
\begin{otherassumption}\label{A1}
    The experiment $\{P_\theta:\theta\in\Theta\}$ is differentiable in quadratic mean at $\theta_0$ with non-singular Fisher information matrix $I_\theta$.
\end{otherassumption}
\begin{otherassumption}
    There exists a $\sigma$-finite measure absolutely continuous with respect to the Lebesgue measure on $||\theta-\theta_0||\leq D$ for some $D>0$, with density $\pi$ such that
    \begin{equation*}
        \sup_{\theta:||\theta-\theta_0||\leq D}\frac{|J(y,\theta)-\pi(\theta)|}{J(y,\theta)}\xrightarrow{P_{\theta_0}}0.
    \end{equation*}
\end{otherassumption}
\begin{otherassumption}
    The limiting density $\pi(\theta)$ is continuous and positive at $\theta_0$.
\end{otherassumption}
\begin{otherassumption}
    There exist finite measures $P_\theta^1$, $P_\theta^2$ absolutely continuous with respect to the Lebesgue measure having densities $p^1(y,\theta),p^2(y,\theta)$, respectively, such that
    \begin{equation*}
        p(y,\theta)=p^{1}(y,\theta)p^{2}(y,\theta).
    \end{equation*}
    Moreover, there exists a finite measure on $\theta$, absolutely continuous with respect to the Lebesgue measure, with density $\gamma(\theta)$ so that for the same $D>0$ as in Assumption 2,
    \begin{equation*}
        P_{\theta_0}(p^2(y,\theta)J(y,\theta)I_{\{||\theta-\theta_0||>D\}}\leq\gamma(\theta))\to1.
    \end{equation*}
\end{otherassumption}
\begin{otherassumption}
    For every $\epsilon>0$, there exists a sequence of tests $\delta_n$ and a constant $c>0$ such that for $||\theta-\theta_0||\geq\epsilon$,
    \begin{equation*}
        P_{\theta_0}(\delta_n)\to0\hspace{3mm}\text{and}\hspace{3mm} P_\theta(1-\delta_n)\leq e^{-cn(||\theta-\theta_0||^2\wedge1}\hspace{3mm}\text{and}\hspace{3mm} P^{1}_\theta(1-\delta_n)\leq e^{-cn||\theta-\theta_0||^2\wedge1}.
    \end{equation*}
\end{otherassumption} We note that Assumption \ref{A1} is satisfied by our assumptions at the beginning of Section \ref{Fiducial derivations}, though these assumptions have not been verified for either of the models which we explore in this work. These assumptions are quite similar to those typically required for the usual Bernstein-von Mises result, for example in \citet{vaart_1998}. Some examples of models which satisfy these conditions, including a free knot spline model which follows a Gaussian distribution, are given in \citet{borgert2024}.\\

\noindent\textbf{Conflicts of Interest}

\noindent The authors declare no conflicts of interest.\\

\noindent\textbf{Data Availability Statement}

\noindent The data that support the findings of this study are openly available in Constrained-Fiducial-Inference-for-Gaussian-Models at \url{https://github.com/fluryh/Constrained-Fiducial-Inference-for-Gaussian-Models}.

\printbibliography

\appendix
\end{document}